\documentclass[11pt]{article}

\usepackage[margin=1in]{geometry}
\usepackage{amsmath,amssymb,amsthm,mathtools}
\usepackage{enumitem}
\usepackage{microtype}
\usepackage{hyperref}
\hypersetup{colorlinks=true,linkcolor=black,citecolor=black,urlcolor=black}

\newtheorem{theorem}{Theorem}[section]
\newtheorem{lemma}[theorem]{Lemma}
\newtheorem{proposition}[theorem]{Proposition}
\newtheorem{corollary}[theorem]{Corollary}
\newtheorem{definition}[theorem]{Definition}
\newtheorem{assumption}[theorem]{Assumption}
\newtheorem{remark}[theorem]{Remark}

\newcommand{\A}{\mathcal A}
\newcommand{\B}{\mathcal B}
\newcommand{\D}{\mathcal D}

\newcommand{\F}{\mathcal F}
\newcommand{\Hh}{\mathcal H}
\newcommand{\I}{\mathfrak I}
\newcommand{\M}{\mathcal M}
\newcommand{\N}{\mathcal N}
\newcommand{\R}{\mathcal R}

\DeclareMathOperator{\CP}{CP}
\DeclareMathOperator{\Ev}{Ev}

\DeclareMathOperator{\MD}{MD}
\DeclareMathOperator{\Proj}{Proj}
\DeclareMathOperator{\Stab}{Stab}
\DeclareMathOperator{\Tr}{Tr}
\DeclareMathOperator{\vN}{vN}

\title{Emergence of Boolean Facts from Markovian Coarse-Graining in Relational Quantum Causal Processes}
\author{Yipeng Xu\thanks{Correspondence: \texttt{yipeng.xu@ieee.org}.}\\
\small University of Nottingham Ningbo China}
\date{}

\begin{document}
\maketitle

\begin{abstract}
We formulate an operator-algebraic mechanism by which exact Boolean records can arise from local completely positive quantum operations without being imposed as microscopic structure.  The kinematic input is an algebraic process functional assigning probabilities to local normal completely positive operations in a finite operational context.  From the predual response of a target algebra to source interventions, relative to a background strategy class, we define an influence algebra; exact events are then, by definition, the projections in its center.  The dynamical question is whether nontrivial centers can be generated by coarse-graining rather than inserted through split-record laboratories.

We address this question using state-preserving normal unital completely positive coarse-graining channels.  If the Cesaro means of such a channel converge to a Choi--Effros infrared range and the range is asymptotically abelian in the GNS seminorm, then the represented infrared algebra is a commutative von Neumann algebra.  Its projection lattice is therefore a complete Boolean algebra.  We also give a finite-sector block-primitive criterion, motivated by locality and scrambling, which implies this asymptotic abelianness with exponential suppression of off-sector coherences and intra-sector fluctuations.  The result is a conservative mathematical statement: classical facts are not identified with arbitrary projections of a Type-III local algebra, but with central projections selected by an asymptotically abelian completely positive infrared limit.
\end{abstract}

\noindent\textbf{Keywords:}
operator algebras; completely positive maps; Choi--Effros products; quantum coarse-graining; algebraic quantum field theory; Boolean records

\smallskip
\noindent\textbf{Mathematics Subject Classification 2020:}
46L55; 46L53; 81P16; 81T05

\section{Introduction}

A basic requirement on any background-independent quantum theory is that classical records should not have to be inserted as prior microscopic axioms.  The theory should explain why some infrared propositions are jointly readable and obey Boolean logic, while most microscopic quantum propositions do not.  This issue is especially sharp in algebraic quantum field theory (AQFT), where local observable algebras are typically Type-III von Neumann factors.  Such factors have no minimal projections and have trivial center, so exact macroscopic records cannot be obtained merely by declaring arbitrary local projections to be classical events.

The purpose of this paper is not to propose a new interpretation of measurement.  It is more limited, and correspondingly more mathematical: we isolate a completely positive coarse-graining mechanism that produces a commutative infrared von Neumann algebra from a noncommutative quantum process.  The physical motivation is the emergence of macroscopic facts in relational quantum causal processes (RQCP), but the main statements are about normal unital completely positive (UCP) maps, Choi--Effros ranges, stabilizer-defined influence algebras, and asymptotic abelianness in a GNS seminorm.

The starting point is operational.  A finite context is described by local von Neumann algebras and by a normal positive multilinear process functional on local completely positive operations.  Interventions on one subsystem induce normal predual response differences on another subsystem.  The unitaries that leave all such response differences invariant define a stabilizer, and the commutant of this stabilizer is the background-relative influence algebra.  We define exact events as projections in the center of this influence algebra.  This definition is deliberately restrictive: noncommuting effects may be operationally meaningful, but they are not exact Boolean facts.

The central dynamical question is then precise.  Given a noncommutative microscopic algebra $\A$ and a normal UCP coarse-graining channel $\Gamma:\A\to\A$, when does its infrared range define a commutative algebra of exact records?  Choi--Effros theory shows that the range of a completely positive projection carries a natural $C^*$-product.  However, a Choi--Effros range need not be commutative.  The missing condition is asymptotic abelianness: after coarse-graining, the surviving infrared observables must commute in the state-dependent representation relevant for macroscopic readout.

Our first main result gives the corresponding structural theorem.  Let $\omega$ be a faithful normal state preserved by $\Gamma$, and suppose the Cesaro averages
\[
E_N(A)=\frac1N\sum_{k=0}^{N-1}\Gamma^k(A)
\]
converge to a normal UCP projection $E_\infty$.  If the infrared range $E_\infty(\A)$ is asymptotically abelian in the GNS seminorm, then the von Neumann algebra $\pi_\omega(E_\infty(\A))''$ is commutative.  Consequently its projections form a complete Boolean algebra.  In the RQCP terminology, exact macroscopic facts are the center projections of this represented infrared influence algebra.

Our second main result gives a concrete sufficient mechanism.  Suppose the channel admits mutually orthogonal invariant macroscopic sector projections $P_a$, is primitive within each block, and suppresses off-block coherences exponentially.  Then
\[
\Gamma^n(X)=\sum_a\omega_a(P_aXP_a)P_a+R_n(X),
\qquad
\|R_n(X)\|_\omega\le C_Xe^{-\gamma n}.
\]
The limiting algebra is generated by the sector projections $P_a$ and is therefore abelian; the same estimate gives exponential asymptotic abelianness.  The assumptions abstract the role of locality, finite propagation, and scrambling/ETH-type mixing: local dynamics restricts the spread of information, while primitive mixing removes microscopic phase data inside each macroscopic sector.  Protected nonabelian noiseless subsystems are then explicit failure modes rather than hidden exceptions.

The paper is organized as follows.  Section~\ref{sec:events} defines algebraic process functionals, influence algebras, and exact events.  Section~\ref{sec:ce} proves the Choi--Effros/GNS theorem for asymptotically abelian infrared ranges.  Section~\ref{sec:block} gives the block-primitive sufficient condition.  Section~\ref{sec:split-rg} explains how split-record laboratories and multiplicative-domain RG maps appear as finite-stage approximations rather than fundamental assumptions.  Section~\ref{sec:limits} summarizes the scope and failure modes of the construction.

\section{Operational framework and algebraic events}
\label{sec:events}

This section fixes the kinematic language.  No causal order, spacetime metric, or pre-existing Boolean sample space is assumed.

\subsection{Algebraic process functionals}

\begin{definition}[Algebraic process functional]
Let $C=\{\N_i\}_{i=1}^n$ be a finite family of local von Neumann algebras.  An algebraic process functional is a map
\[
\Omega_C:\prod_{i=1}^n \CP_\sigma(\N_i)\longrightarrow [0,1]
\]
which is normal and multilinear in each argument, positive on normal completely positive operations, and normalized on deterministic normal UCP operations:
\[
\Omega_C(\Phi_1,\ldots,\Phi_n)=1
\]
whenever all $\Phi_i$ are deterministic.
\end{definition}

Here $\CP_\sigma(\N_i)$ denotes normal completely positive maps on $\N_i$.  In finite-dimensional process-matrix theory one has
\[
p(\vec a|\vec x)=\Tr\left[W_C\bigotimes_{i\in C}M_i^{a_i|x_i}\right],
\]
where $W_C$ is a positive process matrix normalized on deterministic operations \cite{OCB2012}.  The preceding definition is the corresponding von Neumann algebraic abstraction.

\subsection{Background-relative influence algebras}

Fix a target von Neumann algebra $\M_Y$ and a source system $X$ in a context $C$.  A background strategy class $\B$ specifies the allowed operations on the complement of $X\cup Y$.  For $\beta\in\B$ and a source operation $\Phi_X$, define the normal response functional on $\M_Y$ by
\[
r_Y^{C;\beta}(\Phi_X)(A)=\Omega_C(\Phi_X,A,\beta),
\qquad A\in\M_Y.
\]
Response differences are
\[
\delta_Y^{C;\beta}(\Phi_X,\Phi'_X)
=r_Y^{C;\beta}(\Phi_X)-r_Y^{C;\beta}(\Phi'_X).
\]
Let $\D_Y^{C;\B}$ be the family of all such normal linear functionals.

\begin{definition}[Influence algebra]
For a family $\D$ of normal linear functionals on a von Neumann algebra $M$, set
\[
\Stab(\D)=
\{U\in\mathcal U(M):\delta(U^*AU)=\delta(A),\ 
\forall A\in M,\ \forall\delta\in\D\}.
\]
The influence algebra generated by $\D$ is
\[
\I(\D)=\Stab(\D)'\cap M.
\]
For the process context above we write
\[
\I_Y^{C;\B}:=\I(\D_Y^{C;\B})\subseteq \M_Y .
\]
\end{definition}

Thus $\I_Y^{C;\B}$ is the algebra of target observables that can be changed, or distinguished, by the specified source interventions after quotienting by background symmetries of the response data.

\begin{proposition}[Finite-dimensional density form]
\label{prop:finite-density}
Let $M=M_d(\mathbb C)$ and suppose $\delta_k(A)=\Tr(\Delta_kA)$ for matrices $\Delta_k$.  Then
\[
\I(\D)=\vN\{\Delta_k,\Delta_k^*:k\}.
\]
\end{proposition}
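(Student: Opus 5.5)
Set $\mathcal S=\{\Delta_k,\Delta_k^*\}$ and $N=\vN(\mathcal S)=\mathcal S''$. Since $M=M_d(\mathbb C)$, the double commutant theorem identifies $N$ with the unital $*$-algebra generated by $\mathcal S$. The goal is $\Stab(\D)'=N$. By the double commutant theorem it suffices to show $\Stab(\D)''=N$, and for this it is enough to identify the group $\Stab(\D)$ with $\mathcal U(\mathcal S')$, the unitary group of $\mathcal S'$. Indeed, $\mathcal S'$ is a finite-dimensional von Neumann algebra, so it is the linear span of its unitaries. Hence
\[
\mathcal U(\mathcal S')'=(\mathcal S')'=\mathcal S''=N,
\qquad\text{so}\qquad
\I(\D)=\Stab(\D)'\cap M=N .
\]

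First I rewrite the stabilizer condition. Cyclicity of the trace gives $\delta_k(U^*AU)=\Tr(U\Delta_kU^*A)$. Since the trace pairing on $M_d(\mathbb C)$ is nondegenerate, the condition $\delta_k(U^*AU)=\delta_k(A)$ for all $A$ is equivalent to $U\Delta_kU^*=\Delta_k$, that is, $U\Delta_k=\Delta_kU$.

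Next I check that commuting with $\Delta_k$ already forces commuting with $\Delta_k^*$. Taking adjoints of $U\Delta_k=\Delta_kU$ gives $\Delta_k^*U^*=U^*\Delta_k^*$. Multiplying on both sides by $U$ yields $U\Delta_k^*=\Delta_k^*U$. Therefore $\Stab(\D)=\{U\in\mathcal U(M):U\in\mathcal S'\}=\mathcal U(\mathcal S')$, which is exactly the identification needed above.

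The main obstacle is small: $\D$ is not assumed self-adjoint, so one has to see that $\Delta_k^*$ may be adjoined for free. This is the adjoint step just given. The rest is the standard fact that a finite-dimensional von Neumann algebra is spanned by its unitaries. One concrete way to see this: any self-adjoint $H$ with $\|H\|\le1$ equals $\tfrac12\bigl(W+W^*\bigr)$ with $W=H+i\sqrt{1-H^2}$, and $W$ is a unitary lying in the algebra generated by $H$.
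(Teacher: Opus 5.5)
Your proof is correct and follows the paper's route: cyclicity and nondegeneracy of the trace pairing give $\Stab(\D)=\mathcal U(\mathcal S')$, and taking commutants gives $\mathcal S''=N$. You also spell out the two points the paper leaves implicit, namely that commuting with $\Delta_k$ forces commuting with $\Delta_k^*$, and that $\mathcal S'$ is spanned by its unitaries.

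One small slip: the reduction ``it suffices to show $\Stab(\D)''=N$'' should read $\Stab(\D)''=N'$. Your displayed chain $\mathcal U(\mathcal S')'=(\mathcal S')'=\mathcal S''=N$ never uses that sentence, so nothing breaks.
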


\begin{proof}
The stabilizer condition is
\[
\Tr(\Delta_kU^*AU)=\Tr(\Delta_kA)
\qquad\forall A,k .
\]
By cyclicity and nondegeneracy of the trace pairing, this is equivalent to
$U\Delta_kU^*=\Delta_k$ for every $k$, and similarly for $\Delta_k^*$.
Hence $\Stab(\D)=\mathcal U(\vN\{\Delta_k,\Delta_k^*\}')$, and taking the commutant inside $M_d(\mathbb C)$ gives the claim.
\end{proof}

\subsection{Exact events as center projections}

\begin{definition}[Exact event]
An exact event is a tuple
\[
e=(C,Y,\B,P),
\qquad
P\in\Ev_0(Y|C;\B),
\]
where
\[
\Ev_0(Y|C;\B):=\Proj\bigl(Z(\I_Y^{C;\B})\bigr).
\]
\end{definition}

The point of using the center is that it separates exact facts from merely sharp but incompatible quantum propositions.

\begin{proposition}[Boolean facts]
\label{prop:boolean-center}
For every von Neumann algebra $N$, the projection lattice $\Proj(Z(N))$ is a complete Boolean algebra.  Its finite lattice operations are
\[
P\wedge Q=PQ,\qquad
P\vee Q=P+Q-PQ,\qquad
\neg P=1-P .
\]
\end{proposition}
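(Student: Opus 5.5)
The plan is to reduce to the fact that the projections of any commutative von Neumann algebra form a complete Boolean algebra, applied to the abelian von Neumann algebra $Z:=Z(N)=N\cap N'$. First I will check that $Z$ is a von Neumann algebra. It contains the unit, is closed under adjoints, and is weak-operator closed, because $N$ and $N'$ are both weak-operator closed. So $\Proj(Z)$ is a complete lattice: the supremum of a family $\{P_i\}$ is the projection onto the closed span of the ranges $P_i\Hh$, and the infimum is the projection onto the intersection of the ranges. The key check is that these operations stay inside $Z$. If $U$ is a unitary in $N'$ or in $N$, then $U$ commutes with every $P_i$, so it leaves each range invariant. It therefore leaves the closed span and the intersection invariant, and so it commutes with the corresponding projections. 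Since every element of a von Neumann algebra is a linear combination of unitaries in it, both the supremum and the infimum lie in $N''\cap N'=N\cap N'=Z$. Hence $\Proj(Z)$ is a complete sublattice of $\Proj(\Hh)$.

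Second, I will verify the finite formulas using commutativity. For commuting projections $P,Q$, the product $PQ$ is self-adjoint and idempotent, and its range is $P\Hh\cap Q\Hh$. Indeed, if $Px=x=Qx$ then $PQx=x$; conversely, $PQx=QPx$ lies in both ranges. So $P\wedge Q=PQ$. The operator $1-P$ is a projection in $Z$ with $P(1-P)=0$ and $P+(1-P)=1$, so $\neg P=1-P$ is a complement. By de Morgan, $P\vee Q=1-(1-P)(1-Q)=P+Q-PQ$, which I would confirm directly from the range description.

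Third, I will prove distributivity: $P\wedge(Q\vee R)=P(Q+R-QR)=PQ+PR-PQR$. I will compare this with $(P\wedge Q)\vee(P\wedge R)=PQ+PR-PQPR=PQ+PR-PQR$, using commutativity and $P^2=P$. A complemented distributive lattice is a Boolean algebra, and completeness comes from the first step.

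The only step needing care is closure of the arbitrary suprema and infima under the center. I handle it through the invariant-subspace argument above, or equivalently by noting that the sup is a strong limit of finite sups, and finite sups stay in $Z$ by the polynomial formula, with $Z$ strongly closed.
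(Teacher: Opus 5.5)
Your proof is correct and takes the same route as the paper: reduce to the commutative von Neumann algebra $Z(N)=N\cap N'$, read off the finite lattice operations from the commutativity of its projections, and get completeness from the closure of $Z(N)$ under arbitrary suprema. Your fallback argument via strong limits of finite suprema is exactly the paper's. Your invariant-subspace/bicommutant check and your explicit distributivity computation supply details that the paper leaves implicit.
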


\begin{proof}
$Z(N)$ is a commutative von Neumann algebra.  Projections in a commutative von Neumann algebra commute and hence satisfy the displayed lattice operations.  Completeness follows because arbitrary suprema of projections exist in the strong operator topology and the center is strongly closed.
\end{proof}

\begin{remark}[Noncommuting projections]
If $P,Q$ are orthogonal projections in a Hilbert space, then $PQ$ is an orthogonal projection if and only if $PQ=QP$.  Hence noncommuting sharp propositions cannot be placed in a single exact Boolean event algebra by using operator multiplication as conjunction.
\end{remark}

\section{Markovian coarse-graining and Choi--Effros infrared ranges}
\label{sec:ce}

We now introduce the dynamical mechanism.  Throughout this section $\A$ is a von Neumann algebra, $\Gamma:\A\to\A$ is a normal UCP map, and $\omega$ is a faithful normal state satisfying $\omega\circ\Gamma=\omega$.

\begin{definition}[Coarse-graining channel]
A Markovian coarse-graining channel is a normal UCP map $\Gamma:\A\to\A$ together with a faithful normal invariant state $\omega$.  Its Cesaro averages are
\[
E_N(A)=\frac1N\sum_{k=0}^{N-1}\Gamma^k(A).
\]
We say that $\Gamma$ has a normal infrared projection if $E_N$ converges point-ultraweakly to a normal UCP idempotent $E_\infty$.
\end{definition}

The infrared operator system is
\[
\A_{\rm IR}:=E_\infty(\A).
\]
By Choi--Effros theory, the product
\[
X\circ Y:=E_\infty(XY),\qquad X,Y\in\A_{\rm IR},
\]
turns $\A_{\rm IR}$ into a $C^*$-algebra \cite{ChoiEffros1976,ChoiEffros1977}.  This product need not be the ambient product of $\A$, and the Choi--Effros algebra need not be commutative.

Let $(\pi_\omega,\Hh_\omega,\Omega_\omega)$ be the GNS representation of $\omega$, and write
\[
\|A\|_\omega=\omega(A^*A)^{1/2}.
\]

\begin{definition}[Infrared asymptotic abelianness]
\label{def:aa}
Assume that $E_N\to E_\infty$ in the GNS seminorm on a norm-dense unital $*$-subspace $\A_{\rm loc}\subset\A$.  The infrared range is asymptotically abelian if
\[
\lim_{N\to\infty}\bigl\|[E_N(A),E_N(B)]\bigr\|_\omega=0,
\qquad A,B\in\A_{\rm loc}.
\]
\end{definition}

This is a mean-infrared formulation.  A pointwise condition such as
$\|[\Gamma^n(A),\Gamma^n(B)]\|_\omega\to0$ implies it under the usual uniform boundedness and mixing estimates used below, but the mean form is the minimal condition needed for the Choi--Effros range.

\begin{theorem}[Boolean infrared range]
\label{thm:boolean-ir}
Let $\Gamma$ be a normal UCP channel preserving a faithful normal state $\omega$.  Suppose the normal infrared projection $E_\infty$ exists and the infrared range is asymptotically abelian in the sense of Definition~\ref{def:aa}.  Then
\[
\pi_\omega(\A_{\rm IR})''
\]
is a commutative von Neumann algebra.  Consequently
\[
\Proj\bigl(\pi_\omega(\A_{\rm IR})''\bigr)
\]
is a complete Boolean algebra.
\end{theorem}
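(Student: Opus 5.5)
The plan is to show that any two elements of $\A_{\rm IR}$ have commuting images under $\pi_\omega$ as operators on $\Hh_\omega$. The bicommutant $\pi_\omega(\A_{\rm IR})''$ of a commuting self-adjoint family is commutative, and Proposition~\ref{prop:boolean-center} then gives the Boolean conclusion, since a commutative von Neumann algebra equals its own center. The argument has three steps: reduce to the dense subspace $\A_{\rm loc}$, pass the commutator to the limit, and then upgrade vanishing on the single vector $\Omega_\omega$ to vanishing as an operator.

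\emph{Step 1: a limit on the cyclic vector.} Fix $A,B\in\A_{\rm loc}$ and put $X=E_\infty(A)$, $Y=E_\infty(B)$. Write
\[
[X,Y]-[E_N(A),E_N(B)]=(X-E_N(A))Y+E_N(A)(Y-E_N(B))-(Y-E_N(B))X-E_N(B)(X-E_N(A)).
\]
We have $\|E_N(A)\|\le\|A\|$, since each $E_N$ is UCP. Left multiplication by a bounded operator $T$ satisfies $\|TZ\|_\omega\le\|T\|\,\|Z\|_\omega$. Right multiplication is the delicate point, since $\|ZY\|_\omega$ is not controlled by $\|Z\|_\omega$ in general. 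To avoid it, I will expand $[X,Y]$ against $E_N$ in a different order: estimate $\|X Y-E_N(A)E_N(B)\|_\omega$ as $\|(X-E_N(A))Y\|_\omega+\|E_N(A)(Y-E_N(B))\|_\omega$. For the first term, use $E_\infty E_N=E_\infty$ and invariance, or more simply replace the right-multiplied convergence by convergence in the $\sharp$-seminorm $\omega(Z^*Z+ZZ^*)^{1/2}$. For the adjoint part this follows because $\A_{\rm loc}$ is a $*$-subspace and $E_N(A)^*=E_N(A^*)$. If needed, I will invoke faithfulness and Tomita--Takesaki: for fixed $Y$, $\|ZY\|_\omega=\|J\sigma(Y)^*J\,Z\Omega_\omega\|$ with $Y\Omega_\omega$ in the domain of $\Delta^{1/2}$. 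For $Y$ analytic this is bounded, and I will approximate a general $Y$ accordingly. The conclusion of the step is $\pi_\omega([X,Y])\Omega_\omega=0$ for $A,B\in\A_{\rm loc}$.

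\emph{Step 2: from the vector to the operator.} I will apply Step 1 to the products $ZX,\,Y$ and similar combinations. A cleaner route exists, though. Since $\omega$ is faithful and normal, $\Omega_\omega$ is separating for $\pi_\omega(\A)''$, so $[X,Y]\Omega_\omega=0$ forces $[X,Y]=0$ in $\pi_\omega(\A)$. This needs only that $[X,Y]$ lies in $\A$, which it does. This is the key use of faithfulness.

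\emph{Step 3: density.} By norm-density of $\A_{\rm loc}$ and norm continuity of $E_\infty$, which is a contraction, $\{E_\infty(A):A\in\A_{\rm loc}\}$ is norm dense in $\A_{\rm IR}$. Since commutators are norm continuous, $\A_{\rm IR}$ is an ambient-commuting self-adjoint set. Its image $\pi_\omega(\A_{\rm IR})$ is therefore a commuting self-adjoint set $S$. Then $S\subseteq S'$, hence $S''\subseteq S'=S'''=(S'')'$, so $S''$ is commutative. Projections of a commutative von Neumann algebra form a complete Boolean algebra, by Proposition~\ref{prop:boolean-center} applied with $N=S''=Z(S'')$.

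The main obstacle is Step 1: controlling right multiplication in the GNS seminorm, since only $\|\cdot\|_\omega$-convergence on $\A_{\rm loc}$ is assumed. I will first try to avoid the issue by using $\|Z\|_\omega^\sharp$ convergence, obtained from the $*$-invariance of $\A_{\rm loc}$, together with the bound $\|ZY\|_\omega\le\|Y^*\|\,\|Z^*\|_\omega$-type estimates via the modular conjugation. If that fails, I will fall back on the ultraweak convergence of $E_N$ and test $[X,Y]$ against the vectors $\pi_\omega(C)\Omega_\omega$, where lower semicontinuity suffices to show $\omega(C^*[X,Y])=0$ for all $C$, and hence, by faithfulness, that $[X,Y]=0$.
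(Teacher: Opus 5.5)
Your proposal is correct and follows the paper's skeleton. You pass the commutator to the limit on $\Omega_\omega$ for $A,B\in\A_{\rm loc}$, use faithfulness to turn $\|[X,Y]\|_\omega=0$ into $[X,Y]=0$, extend by norm density and contractivity of $E_\infty$, and conclude from $S\subseteq S'$.

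The difference is Step~1, where the paper simply asserts that multiplication is continuous in the GNS seminorm for bounded factors. You are right that right multiplication needs an argument. The paper's claim holds because $\Omega_\omega$, being separating for $\pi_\omega(\A)$, is cyclic for $\pi_\omega(\A)'$. Hence a bounded sequence with $\pi_\omega(Z_N)\Omega_\omega\to0$ satisfies $\pi_\omega(Z_N)T'\Omega_\omega=T'\pi_\omega(Z_N)\Omega_\omega\to0$ for every $T'\in\pi_\omega(\A)'$, and so converges strongly to $0$. Products of bounded strongly convergent sequences converge strongly. This is the dual of the fact you already use in Step~2, and it settles Step~1 in one line.

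Of your listed options, two are sound.
\begin{itemize}
\item The modular route is valid but heavier than needed. For analytic $Y$ one has $\pi_\omega(Y)\Omega_\omega=J\pi_\omega(\sigma_{-i/2}(Y^*))J\Omega_\omega$ with $J\pi_\omega(\A)J=\pi_\omega(\A)'$, followed by Gaussian smoothing whose error is absorbed by left multiplication.
\item Your fallback is valid and the most elementary. $\omega(C^*E_N(A)E_N(B))\to\omega(C^*XY)$ because $\pi_\omega(E_N(B))\Omega_\omega\to\pi_\omega(Y)\Omega_\omega$ in norm and $Z\mapsto\omega(C^*ZY)$ is normal. So the point-ultraweak convergence $E_N(A)\to X$ does the work in place of joint continuity. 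Taking $C=[X,Y]$ then gives $\omega([X,Y]^*[X,Y])=0$ directly; no lower semicontinuity is needed.
\end{itemize}

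Drop the remaining suggestions.
\begin{itemize}
\item Neither $E_\infty E_N=E_\infty$ nor $\sharp$-seminorm convergence by itself controls $\|(X-E_N(A))Y\|_\omega$.
\item The bound $\|ZY\|_\omega\le\|Y^*\|\,\|Z^*\|_\omega$ is false. Take $Y=1$, $Z=e_{12}$, and $\omega=\Tr(\rho\,\cdot)$ with $\rho=\mathrm{diag}(p,1-p)$ and $p<1/2$; then $\|Z\|_\omega=\sqrt{1-p}>\sqrt{p}=\|Z^*\|_\omega$.
\end{itemize}
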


\begin{proof}
For $A,B\in\A_{\rm loc}$, GNS convergence of $E_N$ gives
\[
\pi_\omega(E_N(A))\Omega_\omega\to
\pi_\omega(E_\infty(A))\Omega_\omega,
\qquad
\pi_\omega(E_N(B))\Omega_\omega\to
\pi_\omega(E_\infty(B))\Omega_\omega .
\]
Since $\|E_N(A)\|\le\|A\|$ and $\|E_N(B)\|\le\|B\|$, multiplication is continuous in the GNS seminorm under bounded left and right factors.  Hence
\[
\|[E_\infty(A),E_\infty(B)]\|_\omega
\le
\limsup_{N\to\infty}\|[E_N(A),E_N(B)]\|_\omega=0 .
\]
Thus $\pi_\omega(E_\infty(A))$ and $\pi_\omega(E_\infty(B))$ commute on the cyclic vector in the left ideal seminorm.  Faithfulness of $\omega$ implies that vanishing GNS seminorm is equality in the represented algebra.  Density of $\A_{\rm loc}$ and normal closure give
\[
[\pi_\omega(X),\pi_\omega(Y)]=0,
\qquad X,Y\in \A_{\rm IR}.
\]
Therefore $\pi_\omega(\A_{\rm IR})''$ is commutative.  The final claim follows from Proposition~\ref{prop:boolean-center}.
\end{proof}

\begin{corollary}[Exact infrared events]
If the represented infrared influence algebra is
\[
\I_{\rm IR}:=\pi_\omega(\A_{\rm IR})'',
\]
under the hypotheses of Theorem~\ref{thm:boolean-ir}, then
\[
\Ev_{\rm IR}:=\Proj(Z(\I_{\rm IR}))=\Proj(\I_{\rm IR})
\]
is a complete Boolean algebra.
\end{corollary}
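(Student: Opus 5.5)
The corollary follows almost immediately from Theorem~\ref{thm:boolean-ir} together with Proposition~\ref{prop:boolean-center}. The plan is to show that the center of $\I_{\rm IR}$ is all of $\I_{\rm IR}$, so that the two projection lattices coincide, and then to quote the Boolean structure of centers.

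\emph{Step 1.} Under the hypotheses of the corollary, Theorem~\ref{thm:boolean-ir} gives that $\I_{\rm IR}=\pi_\omega(\A_{\rm IR})''$ is a commutative von Neumann algebra.

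\emph{Step 2.} For any commutative von Neumann algebra $N$ we have $N\subseteq N'$, hence
\[
Z(N)=N\cap N'=N .
\]
Applied to $N=\I_{\rm IR}$ this yields $Z(\I_{\rm IR})=\I_{\rm IR}$. Taking projections on both sides gives
\[
\Proj(Z(\I_{\rm IR}))=\Proj(\I_{\rm IR}),
\]
which is the displayed equality.

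\emph{Step 3.} Proposition~\ref{prop:boolean-center}, applied with $N=\I_{\rm IR}$, states that $\Proj(Z(\I_{\rm IR}))$ is a complete Boolean algebra. The lattice operations are $PQ$, $P+Q-PQ$ and $1-P$. Suprema and infima of arbitrary families are taken strongly; they remain in $\I_{\rm IR}$ because a von Neumann algebra is strongly closed. This also agrees with the final clause of Theorem~\ref{thm:boolean-ir}.

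There is no real obstacle here. The only point needing care is notational. The commutant and center must be taken inside $B(\Hh_\omega)$ for the represented algebra $\pi_\omega(\A_{\rm IR})''$, and not inside $\A$. Also, $\I_{\rm IR}$ is a genuine von Neumann algebra by the bicommutant theorem, so Proposition~\ref{prop:boolean-center} applies to it directly.
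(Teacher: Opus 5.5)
Your proposal is correct and follows essentially the argument the paper leaves implicit; the corollary is stated without a separate proof. Theorem~\ref{thm:boolean-ir} makes $\I_{\rm IR}$ commutative, so $Z(\I_{\rm IR})=\I_{\rm IR}$, and Proposition~\ref{prop:boolean-center} then gives the complete Boolean structure.
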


\begin{remark}
The theorem does not say that every fixed point or every Choi--Effros range is classical.  Noncommutative fixed-point algebras and noiseless subsystems are allowed.  They simply do not define exact Boolean records unless their center is selected.
\end{remark}

\section{A block-primitive route to asymptotic abelianness}
\label{sec:block}

The previous theorem isolates the necessary algebraic condition.  We next give a sufficient dynamical criterion.  The formulation is finite-sector for clarity; countable-sector extensions require summability of the constants below.

\begin{assumption}[Block-primitive coarse-graining]
\label{ass:block}
Let $\{P_a\}_{a\in F}$ be a finite family of mutually orthogonal projections in $\A$ with $\sum_aP_a=1$ and $\Gamma(P_a)=P_a$.  For each $a$, let $\omega_a$ be a normal state on $P_a\A P_a$.  There exist constants $\gamma>0$ and, for each local observable $X$, a constant $C_X$ such that
\[
\Gamma^n(X)=
\sum_{a\in F}\omega_a(P_aXP_a)P_a+R_n(X),
\qquad
\|R_n(X)\|_\omega\le C_Xe^{-\gamma n}.
\]
\end{assumption}

The projections $P_a$ label macroscopic sectors.  The estimate contains two ingredients: off-block coherences decay, and within each block the channel is primitive, so local observables converge to scalar multiples of the sector identity.

\begin{theorem}[Block-primitive asymptotic abelianness]
\label{thm:block-primitive}
Under Assumption~\ref{ass:block}, the channel is asymptotically abelian on local observables:
\[
\|[\Gamma^n(A),\Gamma^n(B)]\|_\omega
\le
C_{A,B}e^{-\gamma n}
\]
for all local $A,B$ and a constant $C_{A,B}$.  Moreover the represented infrared von Neumann algebra is generated by the commuting sector projections $\{P_a\}_{a\in F}$.
\end{theorem}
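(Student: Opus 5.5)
Write $\Gamma^n(X)=L(X)+R_n(X)$, where
\[
L(X):=\sum_{a\in F}\omega_a(P_aXP_a)P_a
\]
is the limiting term from Assumption~\ref{ass:block}. The first observation is that $L(A)$ and $L(B)$ lie in the abelian algebra spanned by the mutually orthogonal projections $P_a$. Hence $[L(A),L(B)]=0$ exactly, and expanding the commutator gives
\[
[\Gamma^n(A),\Gamma^n(B)] = [L(A),R_n(B)] + [R_n(A),L(B)] + [R_n(A),R_n(B)].
\]
I will bound each term in the GNS seminorm. Terms with a bounded operator on the left are easy: $\|TR\|_\omega\le\|T\|\,\|R\|_\omega$, and $\|L(X)\|\le\max_a|\omega_a(P_aXP_a)|\le\|X\|$.

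The main obstacle is a remainder multiplied on the left, as in $R_n(A)L(B)$ or $R_n(A)R_n(B)$, since $\|\cdot\|_\omega$ only controls right multiplication by bounded operators. I will handle the pieces as follows.

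\begin{itemize}
\item For $R_n(A)L(B)$, use that $L(B)$ is a finite combination of the $P_a$. Then $R_n(A)P_a=\Gamma^n(A)P_a-L(A)P_a$. Note that $\Gamma^n$ fixes $P_a$, and since $\Gamma$ is UCP, $P_a$ lies in the multiplicative domain of $\Gamma$. Indeed $\Gamma(P_a^2)=\Gamma(P_a)=P_a=\Gamma(P_a)^2$. This gives $\Gamma^n(A)P_a=\Gamma^n(AP_a)$. Therefore $R_n(A)P_a=R_n(AP_a)$, which is bounded by $C_{AP_a}e^{-\gamma n}$ provided the local class is stable under multiplication by the $P_a$. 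I will either assume this or absorb it into the constants.
\item For the quadratic term, I will write $R_n(A)R_n(B)=(\Gamma^n(A)-L(A))R_n(B)$. The factor $\Gamma^n(A)-L(A)$ is bounded in operator norm by $2\|A\|$, so $\|R_n(A)R_n(B)\|_\omega\le 2\|A\|\,C_Be^{-\gamma n}$.
\item The same device handles $R_n(A)L(B)$ directly, as an alternative to the multiplicative-domain argument: $\|R_n(A)L(B)\|_\omega$ can be rewritten as $\|L(B)^*R_n(A)^*\|_\omega$-type quantities only in the tracial case, which is why the multiplicative-domain argument is the one I rely on.
\end{itemize}

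Collecting the bounds gives
\[
C_{A,B}=\|B\|C_A+2\|A\|C_B+\|A\|\,C_B+\sum_a|\omega_a(P_aBP_a)|\,C_{AP_a}
\]
or a similar constant.

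For the second claim, Cesàro-averaging the decomposition gives
\[
E_N(X)=L(X)+\frac1N\sum_{k=0}^{N-1}R_k(X),
\]
and the averaged remainder has $\|\cdot\|_\omega$ at most $C_X/\bigl(N(1-e^{-\gamma})\bigr)\to0$. So $E_\infty(X)$ agrees with $L(X)$ in the GNS seminorm. By faithfulness of $\omega$, $\pi_\omega(E_\infty(X))=\pi_\omega(L(X))$ on the dense local class.

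Hence $\pi_\omega(\A_{\rm IR})$ lies in $\mathrm{span}\{\pi_\omega(P_a)\}$, using density and normality as in the proof of Theorem~\ref{thm:boolean-ir}. Conversely $E_\infty(P_a)=P_a$, so every $P_a$ is in the range. Therefore $\pi_\omega(\A_{\rm IR})''$ is the finite-dimensional abelian algebra generated by the $\pi_\omega(P_a)$.

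Asymptotic abelianness in the mean form of Definition~\ref{def:aa} then follows by the same estimates applied to $E_N$, with $O(1/N)$ decay. This is consistent with Theorem~\ref{thm:boolean-ir}.
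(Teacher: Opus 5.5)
\textbf{Verdict: there is a gap, confined to the commutator estimate.} Your skeleton is the paper's: split off the sector-diagonal part $L$, use $[L(A),L(B)]=0$, and bound the remaining terms. You are right that a remainder standing on the left is the delicate point. The paper passes over it with ``the classical parts are uniformly bounded \ldots\ triangle inequality,'' but $\|\cdot\|_\omega$ only controls left multiplication by bounded operators. Several of your steps are correct: the bound on $L(A)R_n(B)$, the quadratic terms via $\|\Gamma^n(A)-L(A)\|\le 2\|A\|$, and the identity $R_n(A)P_a=R_n(AP_a)$ from $P_a\in\MD(\Gamma)$.

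The gap is the step after that. Assumption~\ref{ass:block} supplies a constant $C_X$ only for local $X$. The $P_a$ are macroscopic sector projections, so nothing makes $AP_a$ local. ``Absorbing it into the constants'' is not available, because no constant $C_{AP_a}$ exists to absorb. As written, you prove the estimate only under an extra closure hypothesis on the local class. Your constant also omits the mirror terms $R_n(B)L(A)$ and $R_n(B)R_n(A)$, which need the same treatment.

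The missing idea is the one your third bullet gestures at. You do not need $\omega$ to be tracial, only tracial relative to the $P_a$, and the hypotheses force this. For local $X$,
\[
\omega(X)=\omega(\Gamma^n(X))=\sum_a\omega(P_a)\,\omega_a(P_aXP_a)+\omega(R_n(X)),
\]
and $|\omega(R_n(X))|\le\|R_n(X)\|_\omega\to0$. By density of the local observables, $\omega=\sum_a\omega(P_a)\,\omega_a(P_a\,\cdot\,P_a)$ on all of $\A$. Hence $\omega(P_aWP_b)=\delta_{ab}\,\omega(P_aWP_a)$ and $\sum_a\omega(P_aWP_a)=\omega(W)$. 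So for $z=\sum_a\lambda_aP_a$ and any $Y\in\A$,
\[
\|Yz\|_\omega^2=\sum_a|\lambda_a|^2\,\omega(P_aY^*YP_a)\le\|z\|^2\,\|Y\|_\omega^2 .
\]
This gives $\|R_n(A)L(B)\|_\omega\le\|B\|\,C_Ae^{-\gamma n}$ directly, with no multiplicative-domain step and no assumption on $AP_a$. The same property is what makes the paper's triangle-inequality line legitimate.

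Your second half (Cesaro averaging, $E_\infty=L$, $\A_{\rm IR}=\mathrm{span}\{P_a\}$) uses only the bounds $\|R_k(X)\|_\omega$ themselves, so it is unaffected and is fine.
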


\begin{proof}
Set
\[
X_A^{\rm cl}:=\sum_a\omega_a(P_aAP_a)P_a,
\qquad
X_B^{\rm cl}:=\sum_a\omega_a(P_aBP_a)P_a .
\]
These elements commute because they are scalar combinations of mutually orthogonal projections.  Thus
\[
[\Gamma^n(A),\Gamma^n(B)]
=
[X_A^{\rm cl}+R_n(A),X_B^{\rm cl}+R_n(B)] .
\]
The term $[X_A^{\rm cl},X_B^{\rm cl}]$ vanishes.  The remaining terms contain at least one $R_n$ factor.  Since the classical parts are uniformly bounded by $\|A\|$ and $\|B\|$, the GNS seminorm estimate follows from the triangle inequality and Assumption~\ref{ass:block}.  Taking $n\to\infty$ shows that all local observables have the same infrared representatives as elements of $\vN\{P_a:a\in F\}$ in the GNS representation.  This algebra is commutative.
\end{proof}

\begin{corollary}
Under Assumption~\ref{ass:block}, the hypotheses of Theorem~\ref{thm:boolean-ir} hold whenever the Cesaro infrared projection exists on the local algebra and extends normally.  The infrared event algebra is
\[
\Proj\left(\vN\{P_a:a\in F\}\right)
=
\left\{\sum_{a\in S}P_a:S\subseteq F\right\}.
\]
\end{corollary}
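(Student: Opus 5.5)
The plan is to verify the two hypotheses of Theorem~\ref{thm:boolean-ir}, namely GNS convergence of $E_N$ on $\A_{\rm loc}$ and mean asymptotic abelianness, directly from Assumption~\ref{ass:block}. After that I will identify the infrared range explicitly and read off its projections. Throughout, $\A_{\rm loc}$ denotes the local observables for which the constants $C_X$ exist. The corollary's proviso, that the Cesaro projection $E_\infty$ exists and extends normally, is taken as given, so only the GNS-level statements need proof.

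First I average the expansion in Assumption~\ref{ass:block}. Write $X^{\rm cl}:=\sum_a\omega_a(P_aXP_a)P_a$. Then
\[
E_N(X)=X^{\rm cl}+\frac1N\sum_{k=0}^{N-1}R_k(X),
\qquad
\Bigl\|\frac1N\sum_{k<N}R_k(X)\Bigr\|_\omega\le \frac{C_X}{N(1-e^{-\gamma})}.
\]
So $E_N(X)\to X^{\rm cl}$ in the GNS seminorm at rate $O(1/N)$. By faithfulness of $\omega$ and uniqueness of the limit, $\pi_\omega(E_\infty(X))=\pi_\omega(X^{\rm cl})$ for local $X$. Next I expand the commutator of averages:
\[
[E_N(A),E_N(B)]=[A^{\rm cl},\bar R_N(B)]+[\bar R_N(A),B^{\rm cl}]+[\bar R_N(A),\bar R_N(B)],
\]
where $\bar R_N$ is the averaged remainder. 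The classical commutator vanishes, as in the proof of Theorem~\ref{thm:block-primitive}.

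The main obstacle is controlling these terms in $\|\cdot\|_\omega$. The seminorm satisfies $\|YZ\|_\omega\le\|Y\|\,\|Z\|_\omega$ for a bounded left factor, but not for a bounded right factor. Terms with $\bar R_N$ on the left therefore need separate care.
\begin{itemize}
\item For the right-hand placements, I use $\bar R_N(X)=E_N(X)-X^{\rm cl}$, which gives $\|\bar R_N(X)\|\le 2\|X\|$ in operator norm. Any term of the form $Y\bar R_N$ is then bounded by $\|Y\|\,\|\bar R_N\|_\omega$.
\item For a term like $\bar R_N(A)B^{\rm cl}$, I use that $B^{\rm cl}=\sum_b c_bP_b$ is a finite combination of projections. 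Each $P_b$ is fixed by $\Gamma$, and I would argue it lies in the centralizer of $\omega$; failing that, I would impose invariance of the sector decomposition under the modular group. Centralizer membership gives $\omega(P_bY^*YP_b)\le\omega(Y^*Y)$ up to finite constants. If that route fails, I would fall back on $\|YP\|_\omega=\|(PY^*)^*\|_\omega$ together with the adjoint estimate $\|\bar R_N(A)^*\|_\omega=\|\bar R_N(A^*)\|_\omega$, valid since $E_N$ and $X\mapsto X^{\rm cl}$ are $*$-preserving, and the same rewriting for $B^{\rm cl}$.
\item The quadratic remainder term is handled by the operator-norm bound on one factor and the seminorm bound on the other.
\end{itemize}
Together these give $\|[E_N(A),E_N(B)]\|_\omega=O(1/N)\to0$, so Theorem~\ref{thm:boolean-ir} applies.

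Finally I identify the event algebra. By the first step together with density of $\A_{\rm loc}$ and normality of $E_\infty$, $\pi_\omega(\A_{\rm IR})''$ is the von Neumann algebra generated by $\pi_\omega(P_a)$, $a\in F$. Conversely each $P_a=E_\infty(P_a)$, so the $P_a$ lie in the range. Since $\omega$ is faithful, $\pi_\omega$ is injective, and this algebra is $*$-isomorphic to $\vN\{P_a\}$. The latter is the finite-dimensional abelian algebra $\mathbb C^{|F|}$, spanned by the mutually orthogonal $P_a$ summing to $1$. Its projections are the elements $\sum_a\lambda_aP_a$ with $\lambda_a^2=\lambda_a=\bar\lambda_a$, that is $\lambda_a\in\{0,1\}$. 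These are exactly the sums $\sum_{a\in S}P_a$ for $S\subseteq F$, and Proposition~\ref{prop:boolean-center} confirms the Boolean structure.
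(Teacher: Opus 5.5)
Averaging the expansion first and verifying Definition~\ref{def:aa} directly is a sound plan. It is tighter than the paper's route, which reaches the corollary through the pointwise estimate of Theorem~\ref{thm:block-primitive} and then an unexplained passage to Ces\`aro means. But the step you flag as the main obstacle is left open: none of your three arguments for the terms $\bar R_N(A)B^{\rm cl}$ and $\bar R_N(B)A^{\rm cl}$ works as stated.
\begin{enumerate}
\item Being fixed by $\Gamma$ does not put $P_b$ in the centralizer of $\omega$. Take $\Gamma=\mathrm{id}$ on $M_2(\mathbb C)$ and a faithful $\omega$ whose density matrix does not commute with $P_b$.
\item Assuming modular invariance of the sectors adds a hypothesis the corollary does not have.
\item The adjoint rewriting fails, because $\|Z\|_\omega=\omega(Z^*Z)^{1/2}$ and $\|Z^*\|_\omega=\omega(ZZ^*)^{1/2}$ are not comparable unless $\omega$ is tracial. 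So smallness of $\|\bar R_N(A^*)\|_\omega$ gives no bound on $\omega((B^{\rm cl})^*\bar R_N(A)^*\bar R_N(A)B^{\rm cl})$.
\end{enumerate}
The paper's proof of Theorem~\ref{thm:block-primitive} passes over this same right-factor term with ``the triangle inequality.''

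The missing idea is that centralizer membership follows from Assumption~\ref{ass:block} together with invariance of $\omega$, not from fixedness. Apply $\omega$ to the expansion. Since $\omega\circ\Gamma^n=\omega$ and $|\omega(R_n(X))|\le\|R_n(X)\|_\omega$ by Cauchy--Schwarz, letting $n\to\infty$ gives $\omega(X)=\sum_a\omega(P_a)\,\omega_a(P_aXP_a)$ for local $X$, hence for all $X\in\A$ by norm density. Replacing $X$ by $P_bXP_b$ gives $\omega=\sum_a\omega(P_a\,\cdot\,P_a)$, so every $P_a$ lies in the centralizer. Writing $B^{\rm cl}=\sum_a c_aP_a$ with $|c_a|\le\|B\|$, you then get
\[
\|Y B^{\rm cl}\|_\omega^2=\sum_a |c_a|^2\,\omega(P_aY^*YP_a)\le \|B\|^2\,\|Y\|_\omega^2 ,
\]
which closes your estimate with the claimed $O(1/N)$ rate. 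Alternatively, for the purely qualitative limit that Definition~\ref{def:aa} requires, use that $\Omega_\omega$ is separating for $\pi_\omega(\A)$, hence cyclic for its commutant. Then a norm-bounded sequence with $\|Z_N\|_\omega\to0$ tends to $0$ strongly, in particular on the vector $\pi_\omega(B^{\rm cl})\Omega_\omega$. With either fix inserted, the rest of your argument is correct: the identification $E_\infty(X)=X^{\rm cl}$, the equality $\A_{\rm IR}=\mathrm{span}\{P_a\}$, and the description of its projections as $\sum_{a\in S}P_a$.
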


\begin{remark}[Physical reading]
Locality and finite propagation restrict which degrees of freedom can contribute to a coarse macroscopic response; Lieb--Robinson bounds are the standard mathematical form of this intuition in quantum spin systems \cite{LiebRobinson1972}.  ETH and scrambling motivate primitive mixing within high-dimensional nonintegrable sectors \cite{Deutsch1991,Srednicki1994}.  The theorem uses these ideas only as motivation for Assumption~\ref{ass:block}; the algebraic conclusion depends on the displayed estimate.
\end{remark}

\begin{remark}[Nonabelian residual sectors]
If a channel contains a protected nonabelian noiseless subsystem, Assumption~\ref{ass:block} fails.  The corresponding infrared range may remain noncommutative.  In the present framework this is not a contradiction: such a sector is residual quantum information, not an exact Boolean record algebra.
\end{remark}

\section{Split-record approximants and RG stability}
\label{sec:split-rg}

AQFT split inclusions provide a useful finite-stage comparison, but they are not taken here as fundamental sources of classicality.  Under standard split hypotheses, local regions $O\Subset\widetilde O$ may admit
\[
\A(O)\subset \F(O,\widetilde O)\subset \A(\widetilde O),
\]
where $\F$ is a Type-I factor \cite{DoplicherLongo}.  A Type-I factor still has trivial center.  Nontrivial records enter only after a record decomposition
\[
\Hh_{\rm lab}=\bigoplus_{a\in F}\Hh_a,
\qquad
P_a:\Hh_{\rm lab}\to\Hh_a,
\]
and the conditional expectation
\[
E_R(X)=\sum_{a\in F}P_aXP_a .
\]
The range is
\[
\R=\bigoplus_{a\in F}B(\Hh_a),
\qquad
Z(\R)=\left\{\sum_{a\in F}\lambda_aP_a\right\}.
\]
Thus split-record laboratories are finite-stage models of the block structure in Section~\ref{sec:block}; the asymptotic theorem explains when this structure is generated by coarse-graining.

The same language clarifies when exact events are stable under renormalization.  Let $\Gamma_\ell:\I_{\ell+1}\to\I_\ell$ be a normal UCP map between influence algebras.  Its multiplicative domain is
\[
\MD(\Gamma_\ell)=
\{a:\Gamma_\ell(a^*a)=\Gamma_\ell(a)^*\Gamma_\ell(a),\
\Gamma_\ell(aa^*)=\Gamma_\ell(a)\Gamma_\ell(a)^*\}.
\]

\begin{lemma}[Projection preservation]
\label{lem:md-proj}
If $P=P^*=P^2$, then $P\in\MD(\Gamma_\ell)$ if and only if $\Gamma_\ell(P)$ is a projection.
\end{lemma}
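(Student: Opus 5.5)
The plan is to compare $P$ and $\Gamma_\ell(P)$ directly through the defining identities of $\MD(\Gamma_\ell)$. Since $P=P^*=P^2$, we have $P^*P=PP^*=P$. Both conditions in the definition of $\MD(\Gamma_\ell)$ therefore collapse to the single identity
\[
\Gamma_\ell(P)=\Gamma_\ell(P)^*\Gamma_\ell(P)=\Gamma_\ell(P)\Gamma_\ell(P)^* .
\]
The first step is to record that $\Gamma_\ell(P)$ is self-adjoint, because $\Gamma_\ell$ is positive and hence $*$-preserving. Both products in the display then equal $\Gamma_\ell(P)^2$, so membership $P\in\MD(\Gamma_\ell)$ is equivalent to $\Gamma_\ell(P)^2=\Gamma_\ell(P)$.

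For the forward direction, $P\in\MD(\Gamma_\ell)$ gives $\Gamma_\ell(P)^2=\Gamma_\ell(P)$. Together with self-adjointness, this says $\Gamma_\ell(P)$ is a projection.

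For the converse, suppose $\Gamma_\ell(P)$ is a projection, so $\Gamma_\ell(P)^*\Gamma_\ell(P)=\Gamma_\ell(P)^2=\Gamma_\ell(P)=\Gamma_\ell(P^*P)$. The same computation gives the identity for $PP^*$. Hence both defining equalities of $\MD(\Gamma_\ell)$ hold, and $P\in\MD(\Gamma_\ell)$.

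There is no real obstacle once it is noted that the paper defines $\MD(\Gamma_\ell)$ by the equality conditions themselves, not by the bimodule property. If one also wants the bimodule characterization $\Gamma_\ell(aP)=\Gamma_\ell(a)\Gamma_\ell(P)$, the only substantive step would be to invoke Choi's multiplicative-domain theorem for UCP maps, which rests on the Kadison--Schwarz inequality $\Gamma_\ell(x)^*\Gamma_\ell(x)\le\Gamma_\ell(x^*x)$. That refinement is not needed for the stated equivalence.
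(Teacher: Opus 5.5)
Your proposal is correct and follows the same route as the paper's proof. For a projection $P$, both multiplicative-domain identities reduce to $\Gamma_\ell(P)=\Gamma_\ell(P)^*\Gamma_\ell(P)=\Gamma_\ell(P)\Gamma_\ell(P)^*$, and self-adjointness of $\Gamma_\ell(P)$ (positivity) turns this into idempotence, which is exactly the paper's argument.
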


\begin{proof}
For a projection $P$, the two multiplicative-domain identities become
\[
\Gamma_\ell(P)=\Gamma_\ell(P)^*\Gamma_\ell(P),
\qquad
\Gamma_\ell(P)=\Gamma_\ell(P)\Gamma_\ell(P)^* .
\]
Since positive maps preserve adjoints on self-adjoint elements, this is equivalent to $\Gamma_\ell(P)^2=\Gamma_\ell(P)$.
\end{proof}

\begin{proposition}[Boolean RG homomorphism]
\label{prop:boolean-rg}
Assume
\[
Z(\I_{\ell+1})\subseteq \MD(\Gamma_\ell),
\qquad
\Gamma_\ell(Z(\I_{\ell+1}))\subseteq Z(\I_\ell).
\]
Then
\[
\Gamma_\ell:\Proj(Z(\I_{\ell+1}))\longrightarrow \Proj(Z(\I_\ell))
\]
is a Boolean algebra homomorphism.
\end{proposition}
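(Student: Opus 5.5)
The plan is to first confirm that $\Gamma_\ell$ sends projections to projections, and then to check that the Boolean operations of Proposition~\ref{prop:boolean-center} are preserved. Let $P\in\Proj(Z(\I_{\ell+1}))$. By the first hypothesis $P\in\MD(\Gamma_\ell)$, so Lemma~\ref{lem:md-proj} shows that $\Gamma_\ell(P)$ is a projection. By the second hypothesis it lies in $Z(\I_\ell)$. Hence $\Gamma_\ell$ restricts to a well-defined map $\Proj(Z(\I_{\ell+1}))\to\Proj(Z(\I_\ell))$.

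The key tool will be the standard multiplicative-domain theorem (Choi). For a UCP map $\Gamma_\ell$ and any $a\in\MD(\Gamma_\ell)$, one has $\Gamma_\ell(ab)=\Gamma_\ell(a)\Gamma_\ell(b)$ and $\Gamma_\ell(ba)=\Gamma_\ell(b)\Gamma_\ell(a)$ for every $b$. I will cite this rather than reprove it. Its proof comes from the Schwarz inequality applied to the $2\times2$ matrix built from $a$ and $b$, or equivalently from the Stinespring dilation $\Gamma_\ell(x)=V^*\pi(x)V$, where $a\in\MD$ is equivalent to $VV^*\pi(a)V=\pi(a)V$ and $VV^*\pi(a^*)V=\pi(a^*)V$. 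Applying it to $P,Q\in\Proj(Z(\I_{\ell+1}))$ gives $\Gamma_\ell(PQ)=\Gamma_\ell(P)\Gamma_\ell(Q)$, which is exactly preservation of meets, $\Gamma_\ell(P\wedge Q)=\Gamma_\ell(P)\wedge\Gamma_\ell(Q)$.

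The remaining operations follow from linearity and unitality. From $\Gamma_\ell(1)=1$ we get $\Gamma_\ell(1-P)=1-\Gamma_\ell(P)$, so complements are preserved. Linearity together with the meet identity gives
\[
\Gamma_\ell(P+Q-PQ)=\Gamma_\ell(P)+\Gamma_\ell(Q)-\Gamma_\ell(P)\Gamma_\ell(Q),
\]
which is preservation of joins. We also get $\Gamma_\ell(0)=0$ and $\Gamma_\ell(1)=1$. All of these lattice formulas are valid in $\Proj(Z(\I_\ell))$ because that lattice is commutative, by Proposition~\ref{prop:boolean-center}. Together these show that $\Gamma_\ell$ is a Boolean algebra homomorphism of the finite lattice operations. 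I will not claim that it preserves arbitrary (infinite) suprema.

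The only real obstacle is the multiplicativity step. The set $\MD(\Gamma_\ell)$ is defined by the two Schwarz-equality conditions, and Lemma~\ref{lem:md-proj} only gives that each $\Gamma_\ell(P)$ is idempotent, not that $\Gamma_\ell(PQ)=\Gamma_\ell(P)\Gamma_\ell(Q)$. So Choi's bimodule property has to be invoked explicitly, and this is what requires $\Gamma_\ell$ to be UCP, or at least $2$-positive and unital. Centrality is not needed for the product identity itself. It is needed only so that $PQ$ is again a projection in the domain lattice, and so that the images commute and the target operations are the commutative ones.
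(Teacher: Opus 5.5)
Your proposal is correct and follows essentially the same route as the paper. Well-definedness comes from Lemma~\ref{lem:md-proj} together with the second hypothesis; meets are preserved by multiplicativity on the multiplicative domain; complements and finite joins then follow from unitality and linearity. Your explicit appeal to Choi's bimodule theorem simply spells out what the paper's phrase ``multiplicativity on the multiplicative domain'' relies on.
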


\begin{proof}
Lemma~\ref{lem:md-proj} gives projection preservation.  For central projections $P,Q\in Z(\I_{\ell+1})$, multiplicativity on the multiplicative domain gives
\[
\Gamma_\ell(PQ)=\Gamma_\ell(P)\Gamma_\ell(Q).
\]
Unitality gives $\Gamma_\ell(1-P)=1-\Gamma_\ell(P)$, and linearity then gives preservation of finite joins.  The second hypothesis places the images in the target center.
\end{proof}

\section{Scope and limitations}
\label{sec:limits}

The construction deliberately separates definitions, theorems, and physical interpretation.

First, exact events are defined as center projections of an influence algebra.  This is a conservative algebraic definition, not a claim that every projection in a microscopic local algebra is a fact.  It is compatible with the Type-III character of AQFT local algebras because nontrivial exact records arise only after taking the appropriate response-stabilized infrared algebra.

Second, Choi--Effros theory alone does not imply classicality.  A completely positive infrared range can remain noncommutative.  The additional condition is asymptotic abelianness in the relevant representation.  The block-primitive theorem gives one sufficient route to this condition, but it is not the only possible route.

Third, the framework has clear failure modes.  Integrable dynamics, many-body localization, exact nonabelian noiseless subsystems, or topological memory can prevent the infrared algebra from becoming abelian.  In such cases the theory predicts residual quantum sectors rather than exact Boolean records.

The mathematical conclusion is therefore modest but useful: under explicit state-preserving CP coarse-graining assumptions, macroscopic Boolean facts can be represented as the projections of a commutative GNS infrared von Neumann algebra generated dynamically from noncommutative quantum operations.

\section*{Acknowledgements}

The author thanks colleagues and readers of the RQCP-QG preprint for discussions on operator-algebraic formulations of coarse-graining and exact records.

\section*{Declarations}

\textbf{Funding:} The author declares that no funds, grants, or other support were received during the preparation of this manuscript. \\
\textbf{Competing Interests:} The author has no relevant financial or non-financial interests to disclose. \\
\textbf{Data Availability:} Data sharing is not applicable to this article as no datasets were generated or analysed during the current study.

\end{document}